\renewcommand{\@biblabel}[1]{\quad#1.}
\date{}
\newcommand{\var}[1]{\mathrm{Var}\left( #1 \right)}
\newtheorem{observation}{Observation}
\begin{document}
\begin{flushleft}
{\Large\textbf
A general framework
for meta-analyzing dependent studies with overlapping subjects
in association mapping
}\\
Buhm Han\,$^{1,2,3,4,\ast}$, Jae Hoon Sul\,$^{5}$, 
Eleazar Eskin\,$^{5,6}$,
Paul I. W. de Bakker\,$^{1,4,7,8}$, 
Soumya Raychaudhuri\,$^{1,2,3,4,9}$\\

$^1$Division of Genetics, Brigham and Women's Hospital, Harvard Medical School, Boston, Massachusetts, USA\\
$^2$Division of Rheumatology, Brigham and Women's Hospital, Harvard Medical School, Boston, Massachusetts, USA\\
$^3$Partners Center for Personalized Genetic Medicine, Boston, Massachusetts, USA\\
$^4$Program in Medical and Population Genetics, Broad Institute of Harvard and MIT, Cambridge, Massachusetts, USA\\
$^5$Computer Science Department, University of California, Los Angeles, California, USA\\
$^6$Department of Human Genetics, University of California, Los Angeles, California, USA\\
$^7$Julius Center for Health Sciences and Primary Care, University Medical Center Utrecht, Utrecht, The Netherlands\\
$^8$Department of Medical Genetics, University Medical Center Utrecht, Utrecht, The Netherlands\\
$^9$Faculty of Medical and Human Sciences, University of Manchester, Manchester, UK\\

 $^\ast$ Corresponding Author E-mail: buhmhan@broadinstitute.org\\
\end{flushleft}

\begin{abstract}
Meta-analysis of genome-wide association studies 
is increasingly popular and many meta-analytic methods have been recently proposed.
A majority of meta-analytic methods combine information from multiple studies by assuming that studies are independent since individuals collected in one study are unlikely to be collected again by another study. 
However, it has become increasingly common to utilize the same control individuals among multiple studies to reduce genotyping or sequencing cost. This causes those studies that share the same individuals to be 
dependent, 
and spurious associations may arise if overlapping subjects are not taken into account in a meta-analysis. 
In this paper, we propose a general framework for meta-analyzing dependent studies 
with overlapping subjects.
Given dependent studies, our approach ``decouples'' the studies into independent studies
such that meta-analysis methods assuming independent studies can be applied.
This enables many meta-analysis methods, such as the random effects model,
to account for overlapping subjects.
Another advantage is that one can continue to use preferred software in the analysis pipeline
which may not support overlapping subjects.
Using simulations and the Wellcome Trust Case Control Consortium data,
we show that our decoupling approach allows both the fixed and the random effects models
to account for overlapping subjects while retaining desirable false positive rate and power.
\end{abstract}

\clearpage
\section{Introduction}

Meta-analysis of genome-wide association studies is becoming increasingly popular
\cite{Fleiss93,Bakker08}.
Investigators combine multiple studies into a single meta-analysis
to increase sample size and thereby the statistical power to detect causal variants. 
In the past two to three years, 
large-scale meta-analysis have been highly successful dramatically increasing
the number of known associated loci in many human diseases \cite{Eyre12,Jostins12}.

There exist a number of categories of methods in meta-analysis.
The fixed effects model (FE) assumes that the effect sizes are fixed across the studies
and is powerful when the assumption holds \cite{Cochran54,Mantel59}.
The random effects model (RE) assumes that the effect sizes can be different between studies,
a phenomenon called heterogeneity \cite{DerSimonian86}.
A recently proposed random effects model is shown to be more powerful than FE
if the data are heterogeneous \cite{Han11}.
 Additional categories of methods include
the p-value-based approaches \cite{Fisher67,Zaykin10},
the subset approaches assuming that the effects are present or absent in the studies \cite{Bhattacharjee12,Han12},
and the Bayesian approaches \cite{Wen11,Morris11,Wen12}.
A majority of these methods combine information from multiple studies 
by assuming that studies are independent 
since individuals collected in one study are unlikely to be collected again by another study.

However, it has become increasingly common to utilize the same control individuals among multiple studies to reduce genotyping or sequencing cost \cite{Burton07}.
This causes those studies that share the same individuals to be dependent,
 and spurious associations may arise if overlapping subjects are not taken into account in a meta-analysis.
A naive solution would be to manually split the overlapping subjects
into distinct studies,
which can be sub-optimal \cite{Lin09} and 
may not be practical if genotype data are not shared.

Recently, Lin and Sullivan proposed a meta-analytic approach 
that takes into account overlapping subjects \cite{Lin09}.
This approach provides an optimal and elegant solution to 
account for the correlation structure between studies caused by the overlapping subjects.
However, 
their method is exclusively based on the fixed effects model.
Recent studies extended this method to the p-value-based approach \cite{Zaykin10}
and the subset approach \cite{Bhattacharjee12},
but to date,
it is unclear how to account for overlapping subjects
in the random effects model and other meta-analytic approaches.

In this paper, 
we propose 
a general framework for meta-analyzing dependent studies with overlapping subjects,
which is an extension of the Lin and Sullivan approach \cite{Lin09}.
Given the correlation structure between studies,
the core idea is
to uncorrelate or decouple the studies into independent
studies 
such that meta-analysis methods assuming independent studies can be applied.
The advantage of our decoupling approach is that it enables
many meta-analysis methods, such as the random effects model,
to account for overlapping subjects.
Since our approach involves only the data-side change rather than the method-side change, 
one can continue to use preferred software 
in the existing analysis pipeline 
which may not support overlapping subjects.
We analytically show that the Lin and Sullivan approach is one special case in our framework in which the fixed effects model is applied after decoupling studies.
We demonstrate the utility of our approach by performing 
a meta-analysis of three autoimmune diseases using
the Wellcome Trust Case Control Consortium data \cite{Burton07}.

\section{Results}

\subsection{Overview of the Method}

Many traditional as well as recently proposed meta-analytic methods exist,
but a majority of them cannot account for overlapping subjects
(Table \ref{table:methods}).
Moreover, even if 
there exists a solution for overlapping subjects in a specific method
such as the fixed effects model,
the software one prefers may not support overlapping subjects.
For example, widely used software METAL \cite{Metal:web} and MANTEL \cite{Bakker08} do not support
overlapping subjects.
We propose a general
framework that allows different approaches to 
deal with overlapping subjects, which extends
the Lin and Sullivan approach \cite{Lin09}.
Through this paper, we will often use the term ``correlation of studies''
to refer to the correlation of statistics (typically z-scores) of studies in short.
The intuition is that the more studies are correlated, 
the less information they contain with respect to
the summary statistic of the meta-analysis.
For example, if two studies are perfectly correlated ($r=1$),
their combined information is not better than
a single study's information.
In Figure \ref{fig:schematic}, we have three studies A, B, and C
whose statistics are correlated.
For simplicity, their variances are set to 1.0.
Our approach ``decouples'' the studies into independent studies
that have the same information with respect to the summary statistic.
The penalty for the decoupling is the increased variances.
The variance of the study B has increased the most drastically (2.52), 
because its correlations to A and C were large (0.5 and 0.3).
The size of the circles denotes the amount of information 
in terms of the inverse variance, showing that B has the smallest information.
We can then use the decoupled studies in the downstream meta-analysis method.
If the downstream method is the fixed effects model,
our decoupling approach is equivalent to the optimal method of Lin and Sullivan \cite{Lin09}
(See Methods).

\subsection{False positive rate and power simulations}

We perform simulations to examine the false positive rate and power of our decoupling approach.
We suppose that ten different studies are combined in a meta-analysis
to test a genotyped marker.
We make an assumption that the studies are uniform in their sample sizes and 
the marker allele frequencies.
We also assume that the sample sizes are sufficiently large.
Under these assumptions, 
simulating genotype data is approximately equivalent
to simulating the observed effect sizes directly from a normal distribution.
For simplicity, we assume that the variances of effect sizes are uniformly 1.0.
We use the significance threshold $\alpha = 0.05$ for all simulations below.

We first simulate the null model
in which the marker is not associated with a disease in all studies.
We assume that the correlation $r_{ij}$ between study $i$ and $j$ is uniform for all
study pairs $i \ne j$.
This defines our covariance matrix $\mathbf{\Omega}$.
Then we sample the vector of observed effect sizes $\hat{\mathbf{x}}$ from $N(0, \mathbf{\Omega})$ 10,000 times.
We vary $r_{ij}$ from 0.0 to 0.9 and measure the false positive rates for different meta-analytic 
approaches.

In Figure \ref{fig:power}A,
we compare the false positive rate of the methods for the fixed effects model.
The naive FE method refers to the traditional fixed effects model
unaccounting for the correlations.
The naive method shows dramatically inflated false positive rate as expected,
since the correlations are ignored.
The false positive rate becomes exacerbated as the unaccounted correlation increases,
up to $0.52$ at $r_{ij}=0.9$.
The decoupling FE refers to our decoupling approach 
applying FE after decoupling the studies.
Both the decoupling FE and the Lin-Sullivan approach correctly 
control the false positive rate.
The two methods yield the identical results,
since they are equivalent (See Methods).
The average false positive rate of the two methods over all correlation values $r_{ij}$ was 
identically 0.050.

In Figure \ref{fig:power}B,
we assess the false positive rate of the methods for the random effects model.
The naive RE method refers to the Han and Eskin random effects model \cite{Han11}
unaccounting for the correlation.
The naive method shows dramatically inflated false positive rate that increases with
the correlation.
The decoupling RE refers to our decoupling approach 
applying 
the Han and Eskin random effects model \cite{Han11}
 after decoupling the studies.  
The decoupling RE correctly controls the false positive rate,
with some conservative tendencies.
The average false positive rate of the decoupling RE over all correlation values $r_{ij}$ was 0.034.

In Figure \ref{fig:power}C,
we simulate the alternative model assuming 
that the fixed effects model is the generative model.
We fix the correlation to be $r_{ij}=0.20$. 
We sample $\hat{\mathbf{x}}$ from $N(\beta\mathbf{e}, \mathbf{\Omega})$ 
where we vary the mean effect $\beta$ from 0.0 to 2.0.
The decoupling FE shows power increase as the mean effect increases.
The power is identical to the Lin-Sullivan approach,
since the two methods are equivalent.
Note that the naive FE method is not shown in the power comparison
since its false positive rate is not properly controlled.

In Figure \ref{fig:power}D,
we simulate the alternative model assuming 
that the random effects model is the generative model.
Again, we fix the correlation to be $r_{ij}=0.20$. 
We sample $\hat{\mathbf{x}}$ from $N(\beta\mathbf{e}, \mathbf{\Omega} + \tau^2I)$ 
where we vary both $\beta$ and the heterogeneity $\tau^2$.
The power of the decoupling RE is shown for different configurations of the models.
We find that the power shows typical characteristics of the random effects model;
the power increases as the mean effect increases and as the heterogeneity increases \cite{Han11}.
This shows that when we direct decoupled studies into the random effects model,
the method has power to detect alternative models that the random effects model is designed for.

In summary, our simulations stress a few points;
(1) The decoupling approach can be flexibly applied to both the fixed and the random effects models.
(2) When applied to the fixed effects model, the decoupling approach shows equivalent results
to the Lin-Sullivan approach. The method accurately controls the false positive rate
and shows power to detect the alternative model.
(3) When applied to the random effects model, the decoupling approach controls the false
positive rate with conservative tendencies, while retaining the power to detect alternative models
of the random effects model.

\subsection{Applications to the Wellcome Trust Case Control Consortium data}

We apply our decoupling approach to the Wellcome Trust Case Control Consortium (WTCCC) data. 
The WTCCC has performed genome-wide association studies of seven diseases 
(bipolar disorder, coronary artery disease, Crohn's disease, hypertension, rheumatoid arthritis,
type 1 diabetes, and type 2 diabetes, or BD, CAD, CD, HT, RA, T1D, and T2D in short).
Using these data, we perform meta-analysis of three autoimmune diseases (CD, RA, and T1D).
These data sets are a good example of overlapping subjects because all of the controls are shared
between diseases.
The WTCCC performed a combined analysis using the genotype data of these
three diseases and reported four significant loci 
(See Supplementary Table 11 of Burton et al. \cite{Burton07}).
We want to show the utility of our approach 
by reproducing the same results 
only
using summary statistics without genotype data.

We first calculate the correlation matrix between the seven diseases using the Lin and Sullivan formula
(See equation (\ref{eq:r}) in Methods).
Figure \ref{fig:cor} shows that the studies are positively correlated due to the shared control design.
The correlations are at around $r=0.4$ at all pairs of the diseases. 
These uniform correlations reflects
the unique study design that all controls are shared and the similar numbers of cases are collected in
all diseases.

We perform the meta-analysis of three autoimmune diseases (CD, RA, and T1D) using
the log odds ratios and their standard errors.
We consider 397,450 SNPs that passed quality control for all three diseases and 
the minor allele frequency is greater than 1\%.
We first apply the naive fixed effects model (FE) and the random effects model (RE) 
which do not take into account the correlation structure.
Figure \ref{fig:wtcccqq}A shows that the qq-plot is highly inflated for both the naive FE and 
the naive RE
(the genomic control factors \cite{Devlin95}, $\lambda_{\text{FE}}=1.86$ and $\lambda_{\text{RE}}=1.62$,
excluding the MHC region).
Since the p-values are highly inflated,
further downstream analyses using these naive approaches can be susceptible to false positives.

We then apply our decoupling approach to account for the correlation structure.
We construct the decoupled studies and apply FE and RE.
Figure \ref{fig:wtcccqq}B shows that the qq-plot is much better calibrated
($\lambda_{\text{FE}}=1.05$ and $\lambda_{\text{RE}}=0.82$).
Both the decoupling FE and RE approaches identified the four loci as significant
($P < 1.2\times10^{-7}$, Bonferroni corrected for $397,450$ tests)
that were reported in the combined analysis of the WTCCC study \cite{Burton07}
(Table \ref{table:wtccc}).
This shows that our approach 
was able to reproduce the previously reported results only using the summary statistics.

Moreover, our decoupling approach has an advantage over the combined analysis \cite{Burton07}. 
Since our approach can utilize both FE and RE, one can have good power to detect
both the homogeneous and the heterogeneous effects \cite{Han11}.
By contrast, the combined analysis can be thought of as similar to FE 
and may not have good power to detect heterogeneous effects.
In the Manhattan plot (Figure \ref{fig:wtcccmht}), 
the notable peaks are the \emph{PTPN22} gene
in the chromosome 1 
and the major histocompatibility complex (MHC) region in the chromosome 6.
Both loci are known to play an important role in autoimmune diseases \cite{Eyre12,Jostins12}.
At both loci, the decoupling RE yields more significant p-values
than the decoupling FE 
(\emph{PTPN22}: $P_{\text{FE}}=5\times 10^{-23}$ and $P_{\text{RE}}=1\times10^{-29}$,
MHC: $P_{\text{FE}}=5\times 10^{-80}$ and $P_{\text{RE}}=8\times 10^{-181}$).
This is because of
the heterogeneous nature of these two loci that 
they are strongly associated to RA and T1D but weakly to CD
(Table \ref{table:wtccc}).
This shows that our decoupling approach can allow one to flexibly apply
different meta-analytic approaches
that are optimized for different situations.

Finally, we examine the robustness of our decoupling approach
by adding the other four diseases as noisy data into the meta-analysis.
Figure \ref{fig:wtcccmht}B shows that when we meta-analyze all seven diseases,
the absolute magnitudes of the significance of
\emph{PTPN22} and MHC loci are reduced, 
but they are still significant for both the decoupling FE and the decoupling RE.
Moreover, the relative significance gain of the decoupling RE compared to the decoupling FE is still largely pronounced
(\emph{PTPN22}: $P_{\text{FE}}=1\times10^{-11}$ and $P_{\text{RE}}=9\times10^{-18}$,
MHC: $P_{\text{FE}}=1\times10^{-28}$ and $P_{\text{RE}}=1\times10^{-89}$).

\section{Materials and Methods}

\subsection{Meta-analytic methods}

We first briefly describe some of the existing meta-analytic methods.

\subsubsection*{Fixed effects model}

The fixed effects model approach (FE) assumes that the magnitude of
the effect size is fixed across the studies \cite{Cochran54,Mantel59}.
The two widely used methods are 
the inverse-variance-weighted effect size method \cite{Fleiss93} and
the weighted sum of z-scores method \cite{Bakker08}.
Since the two methods are approximately equivalent \cite{Han11},
we only describe the inverse-variance-weighted effect size method.
Let $X_1, ..., X_N$ be the effect size estimates in $N$ independent studies
such as the log odds ratios or regression coefficients.
Let $\textrm{SE}(X_i)$ be the standard error of $X_i$ and let $V_i = \textrm{SE}(X_i)^2$.
Let $W_i=V_i^{-1}$ be the inverse variance.
The inverse-variance-weighted summary effect size is
\begin{equation}
  \label{eq:fex}
X_{\text{FE}}=\frac{\sum W_iX_i}{\sum W_i}\;.
\end{equation}
The variance of $X_{\text{FE}}$ is
\begin{equation}
  \label{eq:fev}
 V_{\text{FE}} = \frac{1}{\sum W_i} \;.
\end{equation}
Since the standard error of $X_{\text{FE}}$ is
$\textrm{SE}(X_{\text{FE}})=\sqrt{\sum W_i}^{-1}$,
we can construct a summary z-score
\[Z_{\text{FE}}=\frac{X_{\text{FE}}}{\textrm{SE}(X_{\text{FE}})} 
= \frac{\sum W_iX_i}{\sqrt{\sum W_i}} \]
that follows ${N}(0,1)$ under the null hypothesis
of no associations.
The p-value is calculated
\begin{equation*}
 p_{\text{FE}} = 2\Phi\big(-|Z_{\text{FE}}|\big) 
\end{equation*}
where $\Phi$ is the cumulative density function of the standard normal distribution.

\subsubsection*{Random effects model}

In the random effects (RE) model, it is assumed that the effect size varies among studies, 
a phenomenon called heterogeneity. RE assumes that the effect size follows the probability 
distribution with variance $\tau^2$. There are several approaches to estimate the variance 
$\tau^2$, the most common one being the moment-based estimator of DerSimonian and Laird 
\cite{DerSimonian86}.
 Given the estimate ($\hat{\tau}^2$) of $\tau^2$, the summary effect size estimate is calculated as 
\begin{equation*}
  \label{eq:re_effectsize}
X_{\text{RE}}=\frac{\sum \left(W^{-1}_i + {\hat{\tau}}^2 \right)^{-1}X_i}{\sum \left(W^{-1}_i + {\hat{\tau}}^2 \right)^{-1}}
\end{equation*}
The standard error of $X_{\text{RE}}$ is $\textrm{SE}(X_{\text{RE}}) = \sqrt{\sum \left(W^{-1}_i + {\hat{\tau}}^2 \right)^{-1}}^{-1}$.
In the traditional RE approach, one constructs a z-score statistic 
\begin{equation*}
  \label{eq:re_stat}
Z_{\text{RE}}=\frac{X_{\text{RE}}}{\textrm{SE}(X_{\text{RE}})}
\end{equation*}
and the p-value is computed as $p_{\text{RE}} = 2\Phi\big(-|Z_{\it RE}|\big)$. 
Recently, Han and Eskin found that the traditional RE is conservative and rarely
achieves higher power than the fixed effects model \cite{Han11}.
This is because of the conservatie null hypothesis of the traditional RE
that implicitly assumes heterogeneity under the null.
They proposed a new random effects model that corrects for this problem, and the test statistic is 
\begin{equation}
S_{\text{NewRE}} = \sum \text{log} \left( \frac{V_i}{V_i+\hat{\tau}^2} \right) + \sum \frac{X_i^2}{V_i} - \sum \frac{\left(X_i - \hat{\mu} \right)^2}{V_i+\hat{\tau}^2}
\end{equation}
where $\hat{\mu}$ and $\hat{\tau}^2$ are the maximum likelihood estimations of mean and variance 
of the effect size, respectively, which can be found by an iterative procedure.
The statistic follows a half and half mixture of $\chi^2_{(0)}$ and $\chi^2_{(1)}$ under the null \cite{self87}.
The p-value can be calculated by using the asymptotic distribution
or using a pre-constructed p-value table for a more accurate calculation accounting for 
the small number of observations \cite{Han11}.

\subsubsection*{P-value-based approaches}

There exist meta-analytic approaches combining p-values instead of effect sizes, the most 
traditional one being the Fisher's method \cite{Fisher67}.
The Fisher's method combines the p-values $p_1, ..., p_N$ by constructing a statistic
\[ S_{\text{Fisher}} = -2 \sum_{i=1}^N \log(p_i) \]
which follows $\chi^2_{(2N)}$ under the null.
The p-value-based approaches have advantages that it can be used even when one does not 
have information about the direction of effects. 
A recently proposed p-value-based approach 
by Zaykin and Kozbur
can take into account overlapping subjects \cite{Zaykin10}.

\subsubsection*{Subset approaches}

Subset approaches have similarities to the fixed effects model, 
but the difference is that one assumes that the effects can exist in only a subset of the studies.
Han and Eskin computes a statistic called ``m-value'', a posterior probability that the effect is present
in a study \cite{Han12}.
The m-values are incorporated 
in the weighted z-score FE approach to
upweight studies with high m-values.
The p-value is assessed using the importance sampling.
Bhattacharjee et al. \cite{Bhattacharjee12} proposed an approach that computes FE statistics using all possible subsets of 
studies in a meta-analysis and uses the maximum statistic. 
The method expedites the enumeration of all possible $2^N-1$ possible subsets of $N$ studies
by using a novel statistical technique. 
This method can account for overlapping subjects.

\subsubsection*{Bayesian approaches}

Morris proposed a Bayesian approach optimized for the trans-ethnic meta-analysis \cite{Morris11}.
This method utilizes the Markov Chain Monte Carlo (MCMC) procedure to navigate through possible
disease models.
In his MCMC, 
closely related populations 
have higher chance to have similar effect sizes to increase the statistical power to detect
heterogeneity caused by the population spectrum.
Wen \cite{Wen11} proposed a new method taking into account heterogeneity in the data
using the hierarchical model in the Bayesian framework.
Wen \cite{Wen12} recently extended this method to a multi-way table modeling,
which can account for the correlation structure between studies
or the overlapping subjects.

\subsection{Lin and Sullivan approach}

Lin and Sullivan proposed a systematic approach for dealing with overlapping subjects
for the fixed effects model \cite{Lin09}.
The first step of their approach is to analytically calculate the correlation 
of the statistics $X_1, ..., X_N$ caused by the overlapping subjects.
Let $\mathbf{x}$ be the vector of effect size estimates $\mathbf{x}=(X_1, ..., X_N)$
and let \[ \mathbf{C} = \big[ r_{ij} \big]_{N \times N}\]
be the correlation matrix of $X$ 
where $r_{ij}$ denotes the the correlation between $X_i$ and $X_j$.
$r_{ij}$ is analytically approximated with the formula
\begin{equation}
\label{eq:r}
r_{ij} \approx \left(n_{ij0}\sqrt{\frac{n_{i1}n_{j1}}{n_{i0}n_{j0}}}
+n_{ij1}\sqrt{\frac{n_{i0}n_{j0}}{n_{i1}n_{j1}}}\right) \Big/ \sqrt{n_in_j}
\end{equation}
$n_{i1}$, $n_{i0}$, and $n_i$ (or $n_{j1}$, $n_{j0}$, and $n_j$) are
the number of cases, the number of controls, and the total number of subjects in the $i$th (or $j$th) study, 
respectively.
$n_{ij1}$ and $n_{ij0}$ are the numbers of cases and controls that overlap between the $i$th and $j$th studies. 
See Bhattacharjee et al. \cite{Bhattacharjee12} for an extended formula for the situation
that some cases in one study are controls in other studies.
Given the correlation matrix $\mathbf{C}$,
it is straightforward to calculate
the covariance matrix $\mathbf{\Omega}$.

The second step of the Lin and Sullivan approach is to optimally take into account
$\mathbf{\Omega}$ in the testing.
The optimal fixed effects model meta-analysis statistic is
\begin{equation}
\label{eq:linx}
 X_{\text{Lin}} = \frac{\mathbf{e}^{T}\mathbf{\Omega}^{-1}\mathbf{x}}{\mathbf{e}^{T}\mathbf{\Omega}^{-1}\mathbf{e}} 
\end{equation}
where $\mathbf{e}$ is the vector of ones ($\mathbf{e} = (1, ..., 1)$).
The formal proof for the optimality of this statistic is shown in 
\cite{Wei89} and \cite{Wei85}.
We also present a simple reasoning for deriving this statistic
in Supplementary Materials.
The variance of the statistic is given
\begin{equation}
\label{eq:linv}
  \var{X_{\text{Lin}}} = \frac{1}{\mathbf{e}^{T}\mathbf{\Omega}^{-1}\mathbf{e}} 
\end{equation}
The z-score $X_{\text{Lin}}/\sqrt{\var{X_{\text{Lin}}}}$ is calculated to obtain the p-value.
Note that in a special case that $X_1, ..., X_N$ are independent
($\mathbf{\Omega}$ is a diagonal matrix), 
$X_{\text{Lin}}$ equals to $X_{\text{FE}}$ 
and $\var{X_{\text{Lin}}}$ equals to
$\var{X_{\text{FE}}}$.

\subsection{Decoupling approach}

We extend the Lin and Sullivan approach to a general framework that can be applied
to a wide range of meta-analytic methods.
Our approach ``uncorrelates'' or ``decouples'' correlated studies into independent studies
whose standard errors are updated to account for the decoupling.
Suppose that we are given the effect sizes $\mathbf{x}$, the standard errors of them $\mathbf{s}$, and
the correlation matrix $\mathbf{C}$ computed by the formula (\ref{eq:r}).
The decoupling procedure is the following.
\begin{enumerate}
\item Keep the original $\mathbf{x}$. \\ 
$\mathbf{x}_{\text{Decoupled}} \leftarrow \mathbf{x}$ 
\item Compute the covariance matrix of $\mathbf{x}$.\\
 $\mathbf{\Omega} \leftarrow Diag(\mathbf{s})\cdot \mathbf{C}\cdot Diag(\mathbf{s})$
\item Compute the decoupled covariance matrix. \\
$\mathbf{\Omega}_{\text{Decoupled}} \leftarrow Diag(\mathbf{e}^T\mathbf{\Omega}^{-1})^{-1}$
\item Update the standard errors. \\
$\mathbf{s}_{\text{Decoupled}}[i] \leftarrow \sqrt{\mathbf{\Omega}_{\text{Decoupled}}[i,i]} $
for each $i=1, ..., N$
\item Use $\mathbf{x}_{\text{Decoupled}}$ and 
$\mathbf{s}_{\text{Decoupled}}$ in the downstream meta-analysis.
\end{enumerate}
$Diag(\mathbf{s})$ denotes a diagonal matrix whose diagonals are $\mathbf{s}$.
The brakets [\;] denote the index of an element of a vector or a matrix.
In Supplementary Materials, we present a simple R code performing this procedure.

We give a simple working example of this procedure.
Suppose that we have two effect sizes $X_1$ and $X_2$.
For simplicity, let their variances be 1.0.
Under the fixed effects model, the best summary estimate of effect size
will be $(X_1+X_2)/2$ and its variance will be $1.0/2=0.5$,
which will be the correct variance if the two studies are independent ($r_{12}=0$).
Now consider the case that $X_1$ and $X_2$ are highly correlated ($r_{12}=0.99$).
Intuitively, since they are highly correlated, the information they contain is not much
better than the information a single study contains.
The decoupling formula gives us the new variance of each study increased to $1.99$.
When we plug the new variances into the fixed effects model,
the variance of the summary effect size will be $1.99/2 = 0.995 \approx 1.0$
showing that as expected, the uncertainty in the final estimate is approximately
the same as the uncertainty that we would obtain with a single study.

\begin{observation}
Using the decoupled studies in the fixed effects model
is equivalent to the Lin and Sullivan approach.
\end{observation}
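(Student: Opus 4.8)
The plan is to show that running the inverse-variance-weighted fixed effects model of (\ref{eq:fex})--(\ref{eq:fev}) on the decoupled inputs $(\mathbf{x}_{\text{Decoupled}},\mathbf{s}_{\text{Decoupled}})$ produces exactly the Lin and Sullivan statistic $X_{\text{Lin}}$ of (\ref{eq:linx}) together with its variance (\ref{eq:linv}); since both pipelines then form the same z-score $X/\sqrt{\var{X}}$ and hence the same two-sided p-value, the two procedures are equivalent. The whole argument reduces to identifying the decoupled inverse-variance weights and substituting them into the FE formulas.

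First I would read off the weights from steps~3--4 of the decoupling procedure. The decoupled covariance $\mathbf{\Omega}_{\text{Decoupled}} = Diag(\mathbf{e}^{T}\mathbf{\Omega}^{-1})^{-1}$ is diagonal with $i$-th entry $1/(\mathbf{e}^{T}\mathbf{\Omega}^{-1})_i$, so the downstream FE weight is
\[ W_i^{\text{Decoupled}} = \frac{1}{\mathbf{s}_{\text{Decoupled}}[i]^{2}} = \frac{1}{\mathbf{\Omega}_{\text{Decoupled}}[i,i]} = \big(\mathbf{e}^{T}\mathbf{\Omega}^{-1}\big)_i = \sum_{j=1}^{N}\big(\mathbf{\Omega}^{-1}\big)_{ij}, \]
the $i$-th row sum of $\mathbf{\Omega}^{-1}$ (which equals the $i$-th column sum, since $\mathbf{\Omega}^{-1}$ is symmetric as the inverse of a covariance matrix). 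Plugging this and $\mathbf{x}_{\text{Decoupled}}=\mathbf{x}$ into (\ref{eq:fex})--(\ref{eq:fev}), the numerator becomes $\sum_i W_i^{\text{Decoupled}} X_i = \sum_i\big(\sum_j(\mathbf{\Omega}^{-1})_{ij}\big)X_i = \mathbf{e}^{T}\mathbf{\Omega}^{-1}\mathbf{x}$ and the denominator becomes $\sum_i W_i^{\text{Decoupled}} = \mathbf{e}^{T}\mathbf{\Omega}^{-1}\mathbf{e}$, hence
\[ X_{\text{FE}}^{\text{Decoupled}} = \frac{\mathbf{e}^{T}\mathbf{\Omega}^{-1}\mathbf{x}}{\mathbf{e}^{T}\mathbf{\Omega}^{-1}\mathbf{e}} = X_{\text{Lin}},\qquad V_{\text{FE}}^{\text{Decoupled}} = \frac{1}{\mathbf{e}^{T}\mathbf{\Omega}^{-1}\mathbf{e}} = \var{X_{\text{Lin}}}, \]
so the FE and Lin--Sullivan z-scores and p-values agree. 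As a consistency check, when $\mathbf{\Omega}$ is diagonal the row sums collapse to $1/\mathbf{\Omega}[i,i]=W_i$, recovering the already-noted identity $X_{\text{Lin}}=X_{\text{FE}}$ in the independent case.

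There is no deep obstacle here; the computation is essentially bookkeeping, and the only points requiring care are (i) the transpose/index convention in step~3 --- that the $i$-th diagonal entry of $Diag(\mathbf{e}^{T}\mathbf{\Omega}^{-1})$ is the $i$-th row (equivalently column) sum of $\mathbf{\Omega}^{-1}$ --- and (ii) the fact that the downstream FE model consumes only the diagonal of the decoupled covariance, which is legitimate precisely because $\mathbf{\Omega}_{\text{Decoupled}}$ is diagonal by construction (the decoupled studies are treated as independent, so the off-diagonal entries are irrelevant). It is also worth flagging the implicit regularity assumption that every row sum $(\mathbf{e}^{T}\mathbf{\Omega}^{-1})_i$ is positive, so that $\mathbf{s}_{\text{Decoupled}}[i]$ is a genuine standard error; this holds for the positively correlated covariance matrices induced by the shared-control designs that motivate the paper.
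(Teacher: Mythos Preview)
Your argument is correct and is essentially the paper's own proof: both substitute the decoupled variances into the inverse-variance FE formula and simplify, the only cosmetic difference being that you work in coordinates with $W_i^{\text{Decoupled}}=(\mathbf{e}^{T}\mathbf{\Omega}^{-1})_i$ while the paper keeps everything in matrix form via $\mathbf{\Omega}_{\text{Decoupled}}^{-1}=Diag(\mathbf{e}^{T}\mathbf{\Omega}^{-1})$ and the identity $\mathbf{e}^{T}Diag(\mathbf{e}^{T}\mathbf{\Omega}^{-1})=\mathbf{e}^{T}\mathbf{\Omega}^{-1}$. One small caution on your closing remark: positive pairwise correlations alone do not guarantee that every row sum of $\mathbf{\Omega}^{-1}$ is positive (e.g.\ a hub study strongly correlated with two otherwise uncorrelated studies can yield a negative row sum), so that regularity assumption is genuinely an assumption rather than an automatic consequence of the shared-control design.
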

\begin{proof}
Given a covariance matrix $\mathbf{\Omega}$, our decoupling approach will calculate
the updated standard errors $\mathbf{s}_{\text{Decoupled}}$ by calculating $\mathbf{\Omega}_{\text{Decoupled}}$.
Since $\mathbf{\Omega}_{\text{Decoupled}}$ is a diagonal matrix, the following relationship holds
\[\mathbf{\Omega}_{\text{Decoupled}} = Diag(\mathbf{s}_{\text{Decoupled}}) \cdot Diag(\mathbf{s}_{\text{Decoupled}})\]
On the other hand,
given an effect size vector $\mathbf{x}$ and standard errors $\mathbf{s}$,
the standard fixed effects model formulae in equation (\ref{eq:fex}) and (\ref{eq:fev})
can be written 
\[ X_{\text{FE}} = \frac{\mathbf{e}^{T}\mathbf{V}^{-1}\mathbf{x}}{\mathbf{e}^{T}\mathbf{V}^{-1}\mathbf{e}} \]
and
\[  V_{\text{FE}} = \frac{1}{\mathbf{e}^{T}\mathbf{V}^{-1}\mathbf{e}} \]
where $\mathbf{V} = Diag(\mathbf{s}) \cdot Diag(\mathbf{s})$.
If we plug $\mathbf{s}_{\text{Decoupled}}$ into this formula, 
\begin{align*}
  X_{\text{FE}} 
&= \frac{\mathbf{e}^{T}\mathbf{V}^{-1}\mathbf{x}}{\mathbf{e}^{T}\mathbf{V}^{-1}\mathbf{e}} \\
&= \frac{\mathbf{e}^{T}
(Diag(\mathbf{s}_{\text{Decoupled}}) \cdot Diag(\mathbf{s}_{\text{Decoupled}}))^{-1}
\mathbf{x}}{\mathbf{e}^{T}
(Diag(\mathbf{s}_{\text{Decoupled}}) \cdot Diag(\mathbf{s}_{\text{Decoupled}}))^{-1}
\mathbf{e}} \\
&= \frac{\mathbf{e}^{T}
\mathbf{\Omega}_{\text{Decoupled}}^{-1}
\mathbf{x}}{\mathbf{e}^{T}
\mathbf{\Omega}_{\text{Decoupled}}^{-1}
\mathbf{e}} \\
&= \frac{\mathbf{e}^{T}
Diag(\mathbf{e}^T\mathbf{\Omega}^{-1})
\mathbf{x}}{\mathbf{e}^{T}
Diag(\mathbf{e}^T\mathbf{\Omega}^{-1})
\mathbf{e}} \\
&= \frac{\mathbf{e}^{T}
\mathbf{\Omega}^{-1}
\mathbf{x}}{\mathbf{e}^{T}
\mathbf{\Omega}^{-1}
 \mathbf{e}} \\
  &= X_{\text{Lin}}
\end{align*}
where $X_{\text{Lin}}$ is in equation (\ref{eq:linx}).
Similarly, we can show that $V_{\text{FE}}$ equals to $V_{\text{Lin}}$ in equation (\ref{eq:linv}).
Therefore, applying our decoupling approach to the fixed effects model is equivalent
to the Lin an Sullivan approach \cite{Lin09}.

\end{proof}

\section{Discussions}

We proposed a general framework for dealing with overlapping subjects in a meta-analysis.
The core idea is to ``decouple'' the correlated studies into independent studies
and use them in the downstream meta-analysis.
Our approach can flexibly allow many meta-analytic methods,
such as the random effects model, to account for overlapping subjects.
The simulations and the applications to the WTCCC data support
the utilities of our approach.

Since our approach involves only the data-side change rather than the method-side change,
one advantage is that 
one can continue to use preferred software in the existing analysis pipeline
which may not support overlapping subjects.
This can be important in practice, 
since it can be inconvenient to change a pipeline.
For example, one may have been using METAL \cite{Metal:web} or MANTEL \cite{Bakker08}
for automatically detecting strand inconsistencies between studies
and for automatically applying the genomic control \cite{Devlin95}.
Given new data with overlapping subjects,
one does not need to switch to different software supporting overlapping subjects
but can simply update the standard errors using our approach
and continue to use the existing pipeline.

In this paper, we primarily focused on dealing with overlapping subjects,
but our decoupling approach can be applied to any contexts of meta-analysis
where the inputs are correlated.
For example, in an eQTL study, multiple tissues can be analyzed together in a meta-analysis framework
\cite{Sul13}.
Since tissues of the same individual are correlated, 
this results in a meta-analytic problem where the inputs are correlated.
In such cases, our decoupling approach with FE and RE methods can be applied
to detect both tissue-specific and shared eQTLs \cite{Sul13}.

The limitation of our approach is that 
the optimality is guaranteed only under the fixed effects model.
For example, our simulations show that our approach 
has some conservative tendencies under the random effects model,
indicating that our approach may not be optimal, although we showed that
it works well in the simulations and the WTCCC data.
Optimal solutions to account for correlated inputs for each different 
meta-analytic method will be an interesting 
topic for further research.

We note that one should be careful in interpreting data based on the decoupled studies.
For example, the heterogeneity testing is highly conservative
when using decoupled studies and may not well detect true heterogeneity.
Unfortunately, there is no good alternative to the Cochran's Q test 
and the $I^2$ estimate that can take into account correlated inputs yet.
Developing such methods will also be an interesting future research area.

\bibliography{2013_04_23_papers2}

\begin{thebibliography}{10}

\bibitem{Fleiss93}
Joseph.~L. Fleiss.
\newblock {The statistical basis of meta-analysis.}
\newblock {\em Stat Methods Med Res}, 2(2):121--145, 1993.

\bibitem{Bakker08}
Paul I.~W. de~Bakker, Manuel A.~R. Ferreira, Xiaoming Jia, Benjamin~M. Neale,
  Soumya Raychaudhuri, and Benjamin~F. Voight.
\newblock {Practical aspects of imputation-driven meta-analysis of genome-wide
  association studies.}
\newblock {\em Hum Mol Genet}, 17(R2):R122--R128, October 2008.

\bibitem{Eyre12}
Steve Eyre, John Bowes, Doroth{\'e}e Diogo, Annette Lee, Anne Barton, Paul
  Martin, Alexandra Zhernakova, Eli Stahl, Sebastien Viatte, Kate McAllister,
  Christopher~I. Amos, Leonid Padyukov, Rene E.~M. Toes, Tom W.~J. Huizinga,
  Cisca Wijmenga, Gosia Trynka, Lude Franke, Harm-Jan Westra, Lars Alfredsson,
  Xinli Hu, Cynthia Sandor, Paul I.~W. de~Bakker, Sonia Davila, Chiea~Chuen
  Khor, Khai~Koon Heng, Robert Andrews, Sarah Edkins, Sarah~E Hunt, Cordelia
  Langford, Deborah Symmons, Pat Concannon, Suna Onengut-Gumuscu, Stephen~S.
  Rich, Panos Deloukas, Miguel~A Gonzalez-Gay, Luis Rodriguez-Rodriguez,
  Lisbeth {\"A}rlsetig, Javier Martin, Solbritt Rantap{\"a}{\"a}-Dahlqvist,
  Robert~M. Plenge, Soumya Raychaudhuri, Lars Klareskog, Peter~K. Gregersen,
  and Jane Worthington.
\newblock {High-density genetic mapping identifies new susceptibility loci for
  rheumatoid arthritis}.
\newblock {\em Nat Genet}, 44(12):1336--1340, November 2012.

\bibitem{Jostins12}
Luke~L Jostins, Stephan~S Ripke, Rinse K~RK Weersma, Richard H~RH Duerr, Dermot
  P~DP McGovern, Ken Y~KY Hui, James C~JC Lee, L~Philip~LP Schumm, Yashoda~Y
  Sharma, Carl A~CA Anderson, Jonah~J Essers, Mitja~M Mitrovic, Kaida~K Ning,
  Isabelle~I Cleynen, Emilie~E Theatre, Sarah L~SL Spain, Soumya~S
  Raychaudhuri, Philippe~P Goyette, Zhi~Z Wei, Clara~C Abraham, Jean-Paul~JP
  Achkar, Tariq~T Ahmad, Leila~L Amininejad, Ashwin N~AN Ananthakrishnan,
  Vibeke~V Andersen, Jane M~JM Andrews, Leonard~L Baidoo, Tobias~T Balschun,
  Peter A~PA Bampton, Alain~A Bitton, Gabrielle~G Boucher, Stephan~S Brand,
  Carsten~C B{\"u}ning, Ariella~A Cohain, Sven~S Cichon, Mauro~M D'Amato,
  Dirk~D De~Jong, Kathy L~KL Devaney, Marla~M Dubinsky, Cathryn~C Edwards,
  David~D Ellinghaus, Lynnette R~LR Ferguson, Denis~D Franchimont, Karin~K
  Fransen, Richard~R Gearry, Michel~M Georges, Christian~C Gieger, J{\"u}rgen~J
  Glas, Talin~T Haritunians, Ailsa~A Hart, Chris~C Hawkey, Matija~M Hedl,
  Xinli~X Hu, Tom H~TH Karlsen, Limas~L Kupcinskas, Subra~S Kugathasan, Anna~A
  Latiano, Debby~D Laukens, Ian C~IC Lawrance, Charlie W~CW Lees, Edouard~E
  Louis, Gillian~G Mahy, John~J Mansfield, Angharad R~AR Morgan, Craig~C Mowat,
  William~W Newman, Orazio~O Palmieri, Cyriel Y~CY Ponsioen, Uros~U Potocnik,
  Natalie J~NJ Prescott, Miguel~M Regueiro, Jerome I~JI Rotter, Richard K~RK
  Russell, Jeremy D~JD Sanderson, Miquel~M Sans, Jack~J Satsangi, Stefan~S
  Schreiber, Lisa A~LA Simms, Jurgita~J Sventoraityte, Stephan R~SR Targan,
  Kent D~KD Taylor, Mark~M Tremelling, Hein W~HW Verspaget, Martine~M De~Vos,
  Cisca~C Wijmenga, David C~DC Wilson, Juliane~J Winkelmann, Ramnik J~RJ
  Xavier, Sebastian~S Zeissig, Bin~B Zhang, Clarence K~CK Zhang, Hongyu~H Zhao,
  Mark S~MS Silverberg, Vito~V Annese, Hakon~H Hakonarson, Steven R~SR Brant,
  Graham~G Radford-Smith, Christopher G~CG Mathew, John D~JD Rioux, Eric E~EE
  Schadt, Mark J~MJ Daly, Andre~A Franke, Miles~M Parkes, Severine~S Vermeire,
  Jeffrey C~JC Barrett, and Judy H~JH Cho.
\newblock {Host-microbe interactions have shaped the genetic architecture of
  inflammatory bowel disease.}
\newblock {\em Nature}, 491(7422):119--124, November 2012.

\bibitem{Cochran54}
William~G. Cochran.
\newblock {The Combination of Estimates from Different Experiments}.
\newblock {\em Biometrics}, 10:101--129, March 1954.

\bibitem{Mantel59}
Nathan Mantel and William Haenszel.
\newblock {Statistical aspects of the analysis of data from retrospective
  studies of disease.}
\newblock {\em J Natl Cancer Inst}, 22(4):719--748, April 1959.

\bibitem{DerSimonian86}
R.~DerSimonian and N.~Laird.
\newblock {Meta-analysis in clinical trials.}
\newblock {\em Controlled clinical trials}, 7(3):177--188, September 1986.

\bibitem{Han11}
Buhm Han and Eleazar Eskin.
\newblock {Random-effects model aimed at discovering associations in
  meta-analysis of genome-wide association studies.}
\newblock {\em The American Journal of Human Genetics}, 88(5):586--598, May
  2011.

\bibitem{Fisher67}
R.~A. Fisher.
\newblock {Statistical Methods for Research Workers. 13th Ed}.
\newblock Oliver {\&} Boyd, 1967.

\bibitem{Zaykin10}
Dmitri~V Zaykin and Damian~O Kozbur.
\newblock {P-value based analysis for shared controls design in genome-wide
  association studies}.
\newblock {\em Genet Epidemiol}, 34(7):725--738, October 2010.

\bibitem{Bhattacharjee12}
Samsiddhi Bhattacharjee, Preetha Rajaraman, Kevin~B. Jacobs, William~A Wheeler,
  Beatrice~S Melin, Patricia Hartge, Meredith Yeager, Charles~C Chung,
  Stephen~J. Chanock, Nilanjan Chatterjee, and GliomaScan Consortium7.
\newblock {A Subset-Based Approach Improves Power and Interpretation for the
  Combined Analysis of Genetic Association Studies of Heterogeneous Traits}.
\newblock {\em The American Journal of Human Genetics}, 90(5):821--835, May
  2012.

\bibitem{Han12}
Buhm Han and Eleazar Eskin.
\newblock {Interpreting meta-analyses of genome-wide association studies.}
\newblock {\em PLoS Genet}, 8(3):e1002555, March 2012.

\bibitem{Wen11}
Xiaoquan Wen and Matthew Stephens.
\newblock {Bayesian Methods for Genetic Association Analysis with Heterogeneous
  Subgroups: from Meta-Analyses to Gene-Environment Interactions}.
\newblock {\em arXiv.org}, arXiv:1111.1210, November 2011.

\bibitem{Morris11}
Andrew~P. Morris.
\newblock {Transethnic meta-analysis of genomewide association studies}.
\newblock {\em Genet Epidemiol}, 35(8):809--822, November 2011.

\bibitem{Wen12}
Xiaoquan Wen.
\newblock {Bayesian Analysis of Multiway Tables in Association Studies: A Model
  Comparison Approach}.
\newblock {\em arXiv.org}, arXiv:1208.4621, August 2012.

\bibitem{Burton07}
Paul~R Burton, David~G. Clayton, Lon~R. Cardon, Nick Craddock, Panos Deloukas,
  Audrey Duncanson, Dominic~P Kwiatkowski, Mark~I. McCarthy, Willem~H.
  Ouwehand, Nilesh~J. Samani, John~A. Todd, Peter Donnelly, Jeffrey~C. Barrett,
  Paul~R Burton, Dan Davison, Peter Donnelly, Doug Easton, David Evans, Hin-Tak
  Leung, Jonathan~L. Marchini, Andrew~P. Morris, Chris C~A Spencer, Martin~D
  Tobin, Lon~R. Cardon, David~G. Clayton, Antony~P Attwood, James~P Boorman,
  Barbara Cant, Ursula Everson, Judith~M Hussey, Jennifer~D Jolley, Alexandra~S
  Knight, Kerstin Koch, Elizabeth Meech, Sarah Nutland, Christopher~V Prowse,
  Helen~E Stevens, Niall~C Taylor, Graham~R Walters, Neil~M. Walker, Nicholas~A
  Watkins, Thilo Winzer, John~A. Todd, Willem~H. Ouwehand, Richard~W Jones,
  Wendy~L. McArdle, Susan~M. Ring, David~P. Strachan, Marcus Pembrey, Gerome
  Breen, David St~Clair, Sian Caesar, Katherine Gordon-Smith, Lisa Jones,
  Christine Fraser, Elaine~K Green, Detelina Grozeva, Marian~L Hamshere,
  Peter~A Holmans, Ian~R Jones, George Kirov, Valentina Moskvina, Ivan Nikolov,
  Michael~C O'Donovan, Michael~J Owen, Nick Craddock, David~A Collier, Amanda
  Elkin, Anne Farmer, Richard Williamson, Peter McGuffin, Allan~H Young,
  I~Nicol Ferrier, Stephen~G Ball, Anthony~J Balmforth, Jennifer~H Barrett,
  D~Timothy Bishop, Mark~M Iles, Azhar Maqbool, Nadira Yuldasheva, Alistair~S.
  Hall, Peter~S Braund, Paul~R Burton, Richard~J Dixon, Massimo Mangino,
  Suzanne Stevens, Martin~D Tobin, John~R Thompson, Nilesh~J. Samani, Francesca
  Bredin, Mark Tremelling, Miles Parkes, Hazel Drummond, Charles~W Lees,
  Elaine~R Nimmo, Jack Satsangi, Sheila~A Fisher, Alastair Forbes, Cathryn~M
  Lewis, Clive~M Onnie, Natalie~J. Prescott, Jeremy Sanderson, Christopher~G.
  Mathew, Jamie Barbour, M~Khalid Mohiuddin, Catherine~E Todhunter, John~C.
  Mansfield, Tariq Ahmad, Fraser~R Cummings, Derek~P Jewell, John Webster,
  Morris~J Brown, David~G. Clayton, G.~Mark Lathrop, John Connell, Anna
  Dominiczak, Nilesh~J. Samani, Carolina A~Braga Marcano, Beverley Burke,
  Richard Dobson, Johannie Gungadoo, Kate~L Lee, Patricia~B. Munroe, Stephen~J
  Newhouse, Abiodun Onipinla, Chris Wallace, Mingzhan Xue, Mark Caulfield,
  Martin Farrall, Anne Barton, , The Biologics in RA~Genetics Genomics~BRAGGS,
  Ian~N Bruce, Hannah Donovan, Steve Eyre, Paul~D Gilbert, Samantha~L Hider,
  Anne~M Hinks, Sally~L John, Catherine Potter, Alan~J Silman, Deborah P~M
  Symmons, Wendy Thomson, Jane Worthington, David~G. Clayton, David~B Dunger,
  Sarah Nutland, Helen~E Stevens, Neil~M. Walker, Barry Widmer, John~A. Todd,
  Timothy~M. Frayling, Rachel~M. Freathy, Hana Lango, John R.~B. Perry,
  Beverley~M. Shields, Michael~N. Weedon, Andrew~T. Hattersley, Graham~A.
  Hitman, Mark Walker, Kate~S Elliott, Christopher~J. Groves, Cecilia~M.
  Lindgren, Nigel~W. Rayner, Nicholas~J. Timpson, Eleftheria Zeggini, Mark~I.
  McCarthy, Melanie Newport, Giorgio Sirugo, Emily Lyons, Fredrik Vannberg,
  Adrian V~S Hill, Linda~A Bradbury, Claire Farrar, Jennifer~J Pointon, Paul
  Wordsworth, Matthew~A Brown, Jayne~A Franklyn, Joanne~M Heward, Matthew~J
  Simmonds, Stephen C~L Gough, Sheila Seal, Breast~Cancer Susceptibility
  Collaboration~UK, Michael~R Stratton, Nazneen Rahman, Maria Ban, An~Goris,
  Stephen~J Sawcer, Alastair Compston, David Conway, Muminatou Jallow, Melanie
  Newport, Giorgio Sirugo, Kirk~A Rockett, Dominic~P Kwiatkowski, Suzannah~J.
  Bumpstead, Amy Chaney, Kate Downes, Mohammed J.~R. Ghori, Rhian Gwilliam,
  Sarah~E Hunt, Michael Inouye, Andrew Keniry, Emma King, Ralph McGinnis, Simon
  Potter, Rathi Ravindrarajah, Pamela Whittaker, Claire Widden, David Withers,
  Panos Deloukas, Hin-Tak Leung, Sarah Nutland, Helen~E Stevens, Neil~M.
  Walker, John~A. Todd, Doug Easton, David~G. Clayton, Paul~R Burton, Martin~D
  Tobin, Jeffrey~C. Barrett, David Evans, Andrew~P. Morris, Lon~R. Cardon,
  Niall~J Cardin, Dan Davison, and Teresa... Ferreira.
\newblock {Genome-wide association study of 14,000 cases of seven common
  diseases and 3,000 shared controls.}
\newblock {\em Nature}, 447(7145):661--678, June 2007.

\bibitem{Lin09}
Dan-Yu Lin and Patrick~F Sullivan.
\newblock {Meta-Analysis of Genome-wide Association Studies with Overlapping
  Subjects}.
\newblock {\em The American Journal of Human Genetics}, 85(6):862--872,
  December 2009.

\bibitem{Metal:web}
Goncalo Abecasis.
\newblock Metal - meta analysis helper
  (http://www.sph.umich.edu/csg/abecasis/metal).

\bibitem{Devlin95}
B.~Devlin and N~Risch.
\newblock {A comparison of linkage disequilibrium measures for fine-scale
  mapping.}
\newblock {\em Genomics}, 29(2):311--322, September 1995.

\bibitem{self87}
S.~G. Self and K.~Y. Liang.
\newblock {Asymptotic properties of maximum likelihood estimators and
  likelihood ratio tests under nonstandard conditions}.
\newblock {\em Journal of the American Statistical Association},
  82(398):605--610, June 1987.

\bibitem{Wei89}
L~J Wei, D~Y Lin, and L~Weissfeld.
\newblock {Regression Analysis of Multivariate Incomplete Failure Time Data by
  Modeling Marginal Distributions}.
\newblock {\em Journal of the American Statistical Association},
  84(408):1065--1073, December 1989.

\bibitem{Wei85}
L~J Wei and Wayne~E Johnson.
\newblock {Combining Dependent Tests with Incomplete Repeated Measurements}.
\newblock {\em Biometrika}, 72(2):359--364, 1985.

\bibitem{Sul13}
Jae-Hoon Sul, Buhm Han, Chun Ye, Ted Choi, and Eleazar Eskin.
\newblock Effectively identifying eqtls from multiple tissues by combining
  mixed model and meta-analytic approaches.
\newblock {\em PLoS Genet}, In Press, 2013.

\end{thebibliography}

\clearpage
\section{Tables and Figures}

\begin{table}[h]
  \centering
  \begin{tabular}{llc}
Category&    Method & Supports overlapping subjects \\
\hline
Fixed effects model & Traditional FE \cite{Cochran54,Mantel59}& No \\
& Lin and Sullivan \cite{Lin09} & Yes \\
Random effects model & Traditional RE \cite{DerSimonian86}& No \\
& Han and Eskin RE \cite{Han11}& No \\
P-value-based approaches  & Fisher's approach \cite{Fisher67} & No\\
& Zaykin and Kozbur \cite{Zaykin10} & Yes \\
Subset approaches & Bhattacharjee et al. \cite{Bhattacharjee12} & Yes \\
& Binary Effects Model \cite{Han12} & No \\
Bayesian approaches & Hierarchical Bayesian approach \cite{Wen11} & No \\
&Trans-ethnic approach  \cite{Morris11} & No \\
&Multi-way table approach \cite{Wen12} & Yes \\
\end{tabular}
\caption{
  \label{table:methods}
Different meta-analytic methods and their support for overlapping subjects.
}
\end{table}

\begin{sidewaystable}[p]
  \centering
  \begin{tabular}{llllllllll}
&&&\multicolumn{3}{c}{Single disease P-values} && \multicolumn{2}{c}{Meta-analysis P-values}&\\
\cline{4-6} \cline{8-9}
Chr. & RSID & Position & CD & RA & T1D && Decouple-FE &  Decouple-RE & Genes\\
\hline
1&rs6679677&114015850&0.0372&$4.46\times10^{-16}$&$2.20\times10^{-17}$&&$4.48\times10^{-23}$&$1.00\times10^{-29}$&{\it PTPN22} \\
6&rs9273363&32734250&0.225&$5.18\times10^{-07}$&$8.14\times10^{-180 }$&&$5.02\times10^{-80}$&$7.72\times10^{-181}$& HLA genes \\
12&rs17696736&110949538&0.0189&0.0281&$1.98\times10^{-09 }$&&$7.72\times10^{-10}$&$3.40\times10^{-10}$& {\it SH2B3}\\
18&rs2542151&12769947&$3.66\times10^{-05}$&0.0841&0.000297&&$3.74\times10^{-08}$&$5.16\times10^{-08}$& {\it PTPN2}\\
\end{tabular}
\caption{
  \label{table:wtccc}
  Association results of a meta-analysis combining 
  CD, RA, and T1D of the WTCCC data.
  The four loci reported in the WTCCC combined 
  analysis \cite{Burton07} are presented.
  The curated list of candidate genes for each loci are obtained from the ImmunoBase web site (\texttt{www.ImmunoBase.org}).
}
\end{sidewaystable}

\begin{figure}[p]
  \centering
  \includegraphics[width=.75\textwidth]{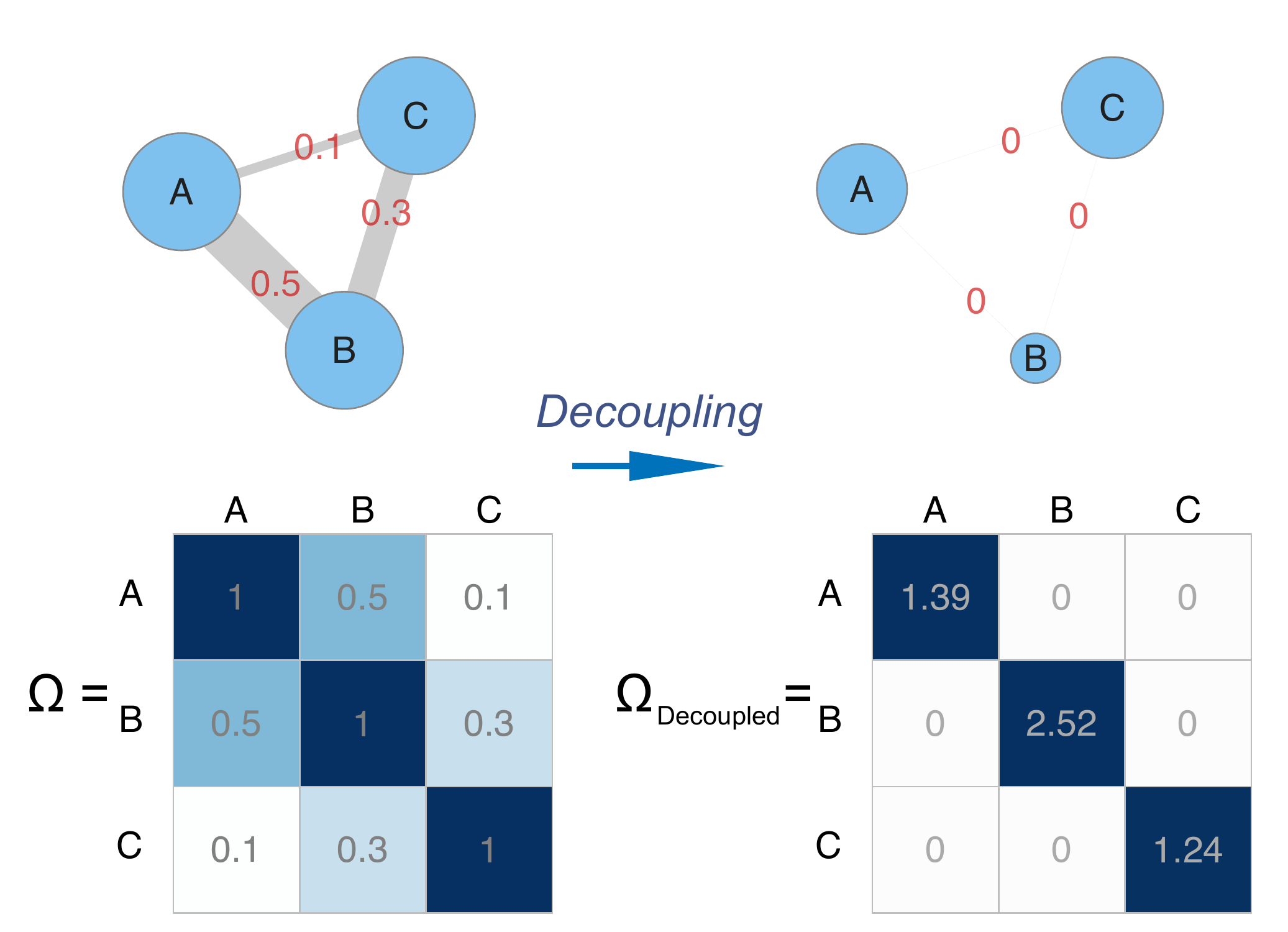}
  \caption{
  \label{fig:schematic}
A simple example of our decoupling approach.
$\mathbf{\Omega}$ and $\mathbf{\Omega}_{\text{Decoupled}}$ are the covariance matrices of the statistics of three studies
A, B, and C before and after decoupling respectively.
The thickness of the edges denotes the amount of correlation between the studies.
After decoupling, the size of the nodes reflects the information 
that the studies contain in terms of
the inverse variance.
}
\end{figure}

\begin{figure}[p]
  \centering
  \includegraphics[width=.95\textwidth]{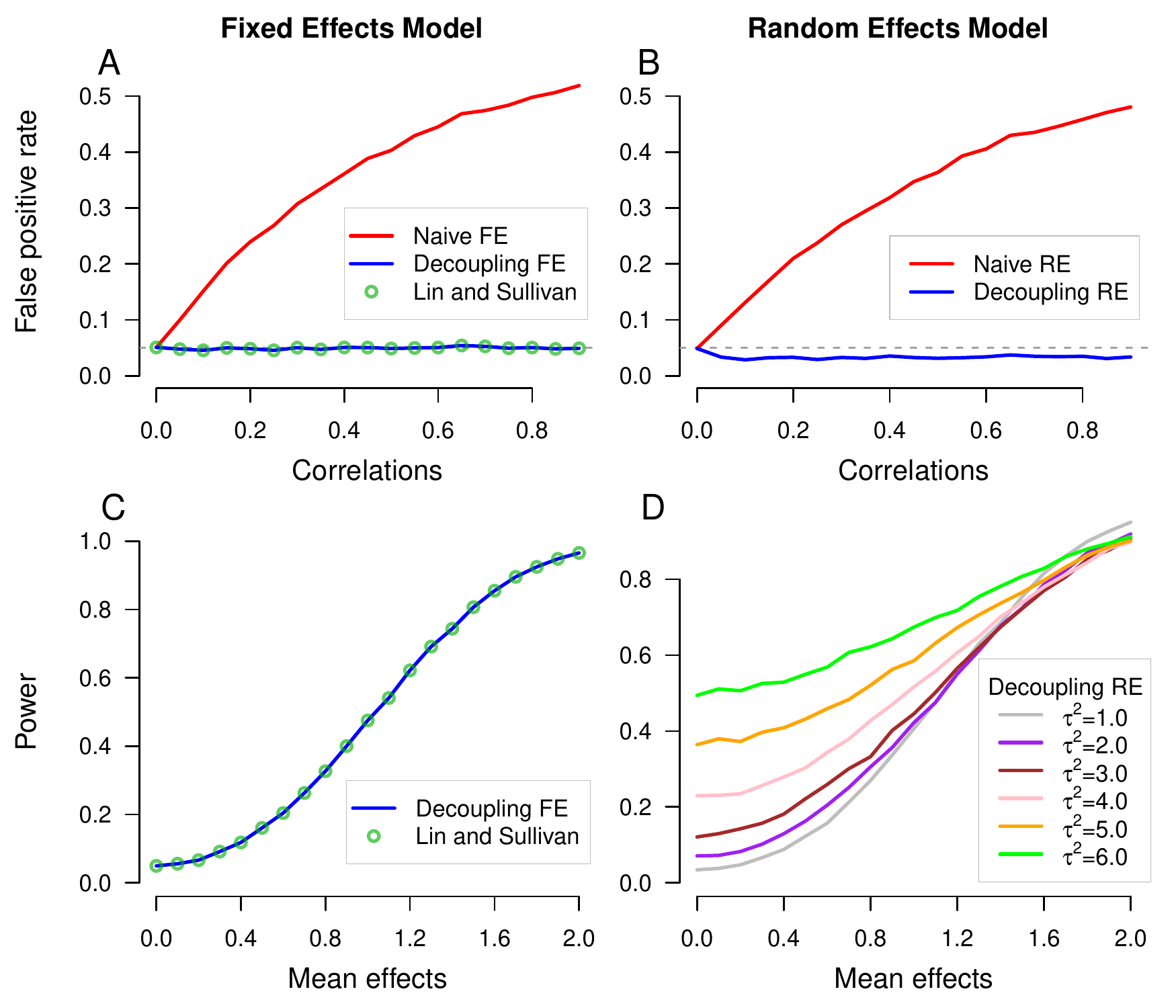}
  \caption{
  \label{fig:power}
The false positive rate and power of different methods under the fixed and the random effects model.
(A, B) We simulate studies assuming the null model of no associations.
(C) We simulate studies assuming the alternative model based on the fixed effects model 
where the effect sizes are fixed across studies. 
We vary the magnitude of the fixed effect size.
(D) We simulate studies assuming the alternative model based on the random effects model 
where the effect sizes vary across studies with an additional variance $\tau^2$.
We vary both the magnitude of the mean effect size and the heterogeneity $\tau^2$.
}
\end{figure}

\begin{figure}[p]
  \centering
  \includegraphics[width=.65\textwidth]{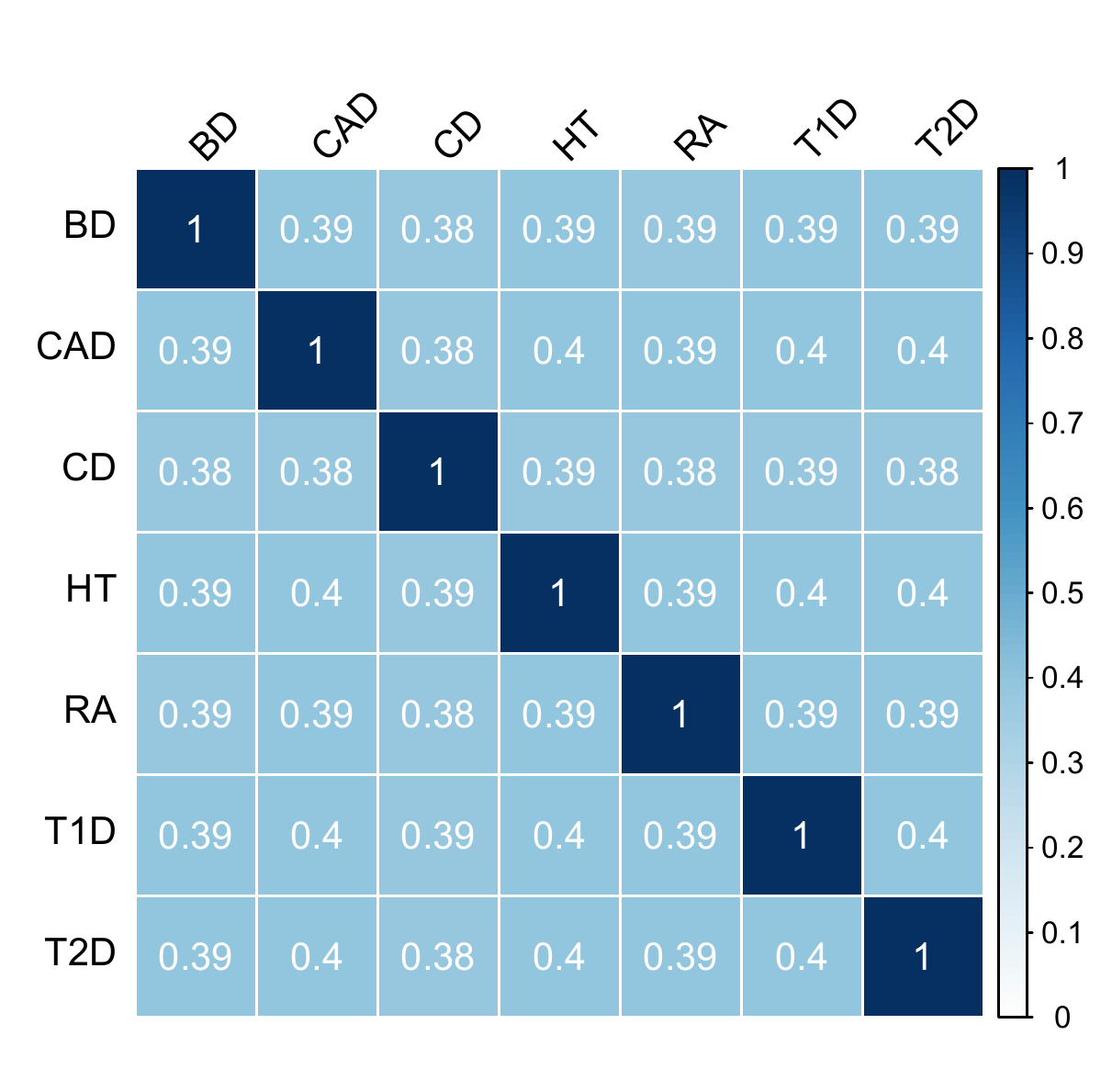}
  \caption{
  \label{fig:cor}
The correlation structure of the statistics of seven diseases in the WTCCC data. 
All seven diseases have approximately 2,000 cases and share 2,938 controls.
}
\end{figure}

\begin{figure}[p]
  \centering
  \includegraphics[width=1.0\textwidth]{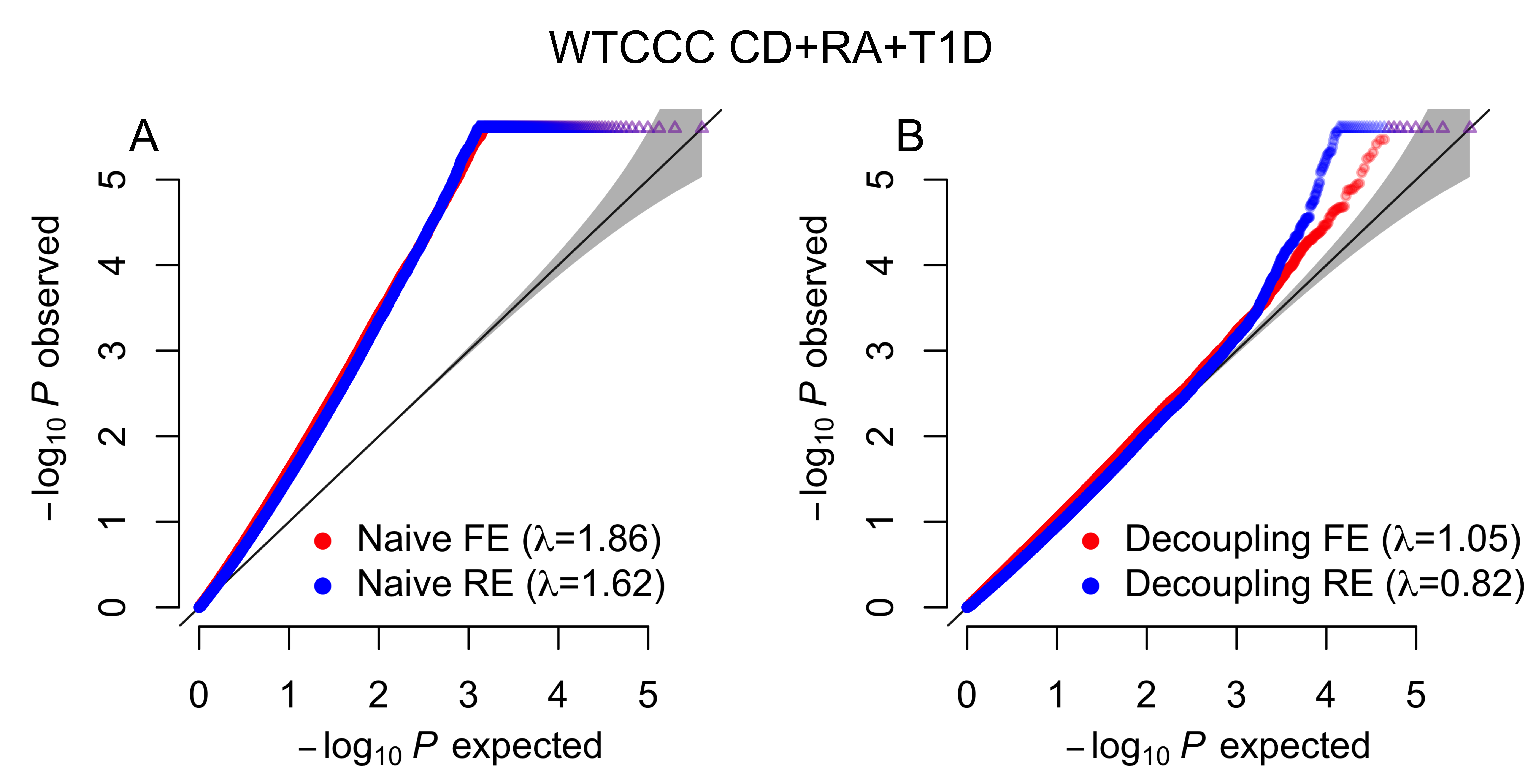}
  \caption{
    \label{fig:wtcccqq}
    The qq-plots of a meta-analysis combining 
    CD, RA, and T1D of the WTCCC data.
    The MHC region is excluded.
}
\end{figure}

\begin{figure}[p]
  \centering
  \includegraphics[width=1.0\textwidth]{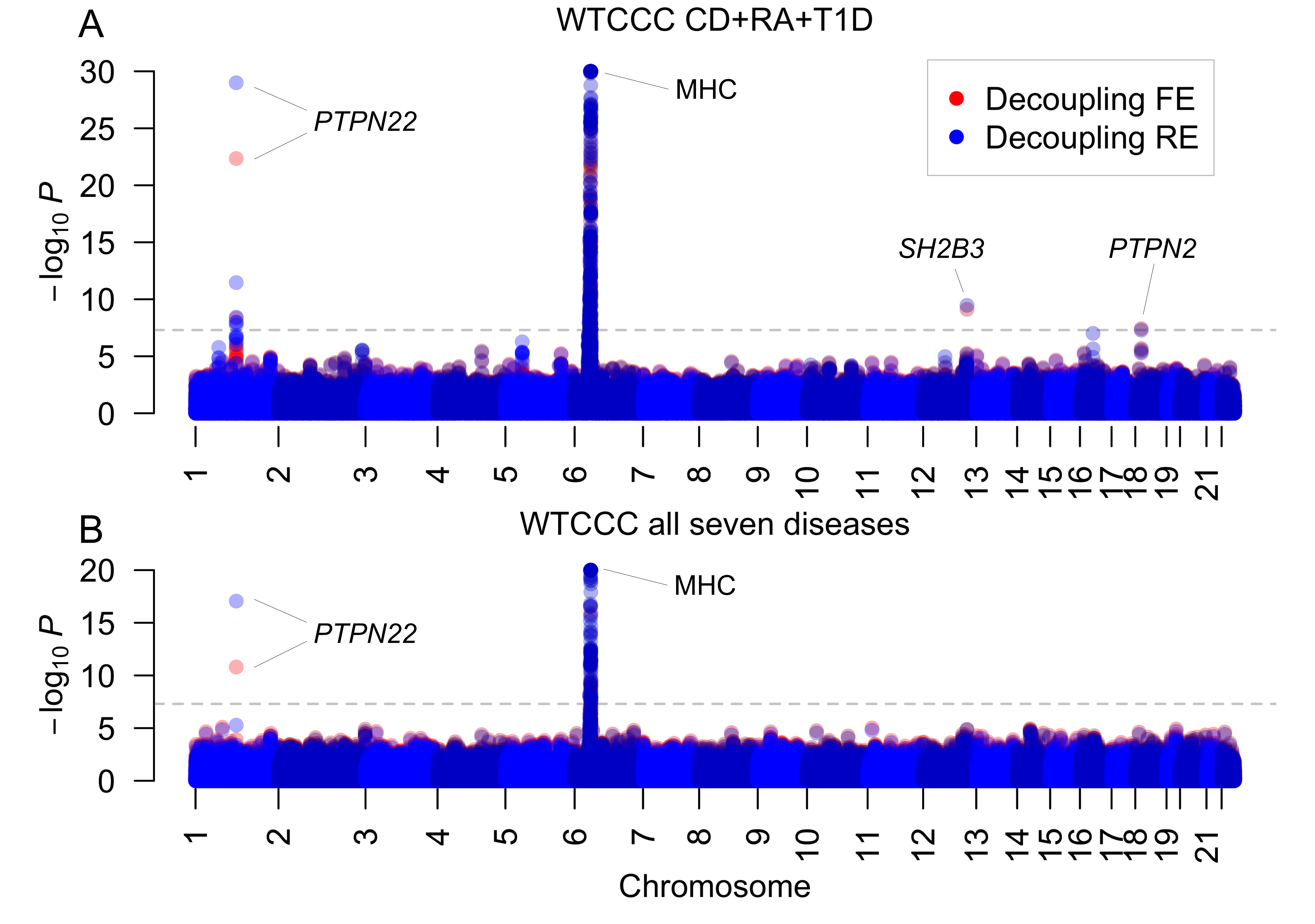}
  \caption{
    \label{fig:wtcccmht}
    The Manhattan plots of meta-analyses combining a set of diseases
    of the WTCCC data.
    (A) We combine CD, RA, and T1D. (B) We combine all seven diseases.
}
\end{figure}

\clearpage
\section{Supplementary Materials}

\subsection{Optimality of Lin and Sullivan approach}

We present a simple reasoning to derive 
and show the optimality of the Lin and Sullivan statistic
in equation (\ref{eq:linx}).
Note that the formal proof of the optimality is shown in \cite{Wei85} and \cite{Wei89},
and there can be many other proofs.
The equation (\ref{eq:linx}) is
\begin{equation*}
 X_{\text{Lin}} = \frac{\mathbf{e}^{T}\mathbf{\Omega}^{-1}\mathbf{x}}{\mathbf{e}^{T}\mathbf{\Omega}^{-1}\mathbf{e}} 
\end{equation*}
where the notations are described in Methods.
One simple way to derive this statistic is to translate the problem of finding
a summary statistic to a linear regression framework.
Consider that $\mathbf{x}$ is the dependent variables whose variance is $\mathbf{\Omega}$.
Then finding the best summary statistic is equivalent to finding the best mean or the intercept $\beta$.
Thus, we have a regression model including only the intercept term 
\[ \mathbf{x} = \beta \mathbf{e} + \epsilon \]
where $\var{\epsilon} = \mathbf{\Omega}$. 
By the standard generalized least square formula, 
the optimal estimate of $\beta$ will be 
\[ \hat{\beta} = (\mathbf{e}^T \mathbf{\Omega}^{-1} \mathbf{e})^{-1} \mathbf{e}^T \mathbf{\Omega}^{-1} \mathbf{x} \]
Since $\mathbf{e}^T \mathbf{\Omega}^{-1} \mathbf{e}$ is a scalar, this form is equivalent to 
$X_{\text{Lin}}$.

\subsection{R code performing decoupling approach}
\begin{verbatim}
## Decoupling approach.
## Input:
##   s is the standard errors (possibly with NA)
##   C is the correlation matrix
##   (or one can specify C.inv (inverse of C) for speed-up)
## Output:
##   updated standard errors after decoupling
decoupling <- function(s, C, C.inv=NULL) {
  i <- !is.na(s)
  if (is.null(C.inv)) {
    Omega.inv <- solve(diag(s[i]) %*% C[i,i] %*% diag(s[i]))
  } else {
    Omega.inv <- diag(1/s[i]) %*% C.inv[i,i] %*% diag(1/s[i])
  }
  s.new <- sqrt(1/rowSums(Omega.inv))
  s[i] <- s.new
  s
}
\end{verbatim}
\end{document}